\documentclass[a4paper,UKenglish,cleveref, autoref, thm-restate]{lipics-v2021}
\nolinenumbers

\usepackage{enumerate}
\usepackage{amsmath,amssymb,amsfonts,latexsym}
\usepackage{stmaryrd}
\usepackage{cleveref}

\newcommand{\emptystr}{\varepsilon}

\newcommand{\ceil}[1]{\lceil #1 \rceil}

\newcommand{\idtt}[1]{\ensuremath{\mathtt{#1}}}
\newcommand{\idsf}[1]{\ensuremath{\mathsf{#1}}}

\newcommand{\conj}[1]{[#1]}

\newcommand{\sqs}[1]{\mathit{SQ}_{#1}}
\newcommand{\factors}[1]{\mathit{Fac}_{#1}}
\newcommand{\lynfac}[1]{\mathit{Lyn}_{#1}}
\newcommand{\rauzy}[1]{\Gamma_{#1}}

\newcommand{\cs}[1]{\mathit{CS}_{#1}}

\newcommand{\avg}[1]{\mathit{avg}_{#1}}

\title{Another Way to Lower the Bound for Distinct Squares}

\author{Eitatsu Tomita}{Kyushu Institute of Technology, Japan}{tomita.eitatsu846@mail.kyutech.jp}{}{}
\author{Tomohiro I}{Kyushu Institute of Technology, Japan}{tomohiro@ai.kyutech.ac.jp}{https://orcid.org/0000-0001-9106-6192}{KAKENHI 24K02899, JST AIP Acceleration Research JPMJCR24U4}

\titlerunning{Another Way to Lower the Bound for Distinct Squares}

\authorrunning{E. Tomita and T. I}

\Copyright{Eitatsu Tomita and Tomohiro I}

\ccsdesc[500]{Mathematics of computing~Combinatorics on words}

\keywords{Number of distinct squares; Repetitions; Combinatorics on words; Rauzy graphs}

\begin{document}
\maketitle              
\begin{abstract}
A square is a word of the form $xx$ for a non-empty word $x$.
Brlek and Li [Comb. Theory, 2025] proved that the number of distinct squares in a word $w$ of length $n$ is at most $n - \sigma$, where $\sigma$ is the number of letters used in $w$.
The same authors extended the proof to lower the upper bound to $n - \Theta(\log n)$ in [WORDS, 2023].
In this paper, we present another proof to obtain the same bound $n - \Theta(\log n)$.
\end{abstract}

\section{Introduction}\label{sec:intro}
A \emph{square} is a word of the form $xx$ for a non-empty word $x$.
Among other formalism of repetitions, squares have been a central matter in combinatorics on words~\cite{1997Lothaire_CombinOnWords,2001Lothaire_AlgebCombinOnWords,2005Lothaire_AppliedCombinOnWords}
with the pioneering work on square-free words by Thue~\cite{1906Thue_UeberUnendZeich}.

Fraenkel and Simpson~\cite{1998FraenkelS_HowManySquarCanStrin} paid attention to 
the size of the set $\sqs{w}$ of distinct squares in a finite word $w$ of length $n$.
While conjecturing $|\sqs{w}| < n$, 
they proved $|\sqs{w}| < 2n$ by showing that at most two squares in $\sqs{w}$ 
share the same position for their rightmost occurrences.
Since then, their proof technique has been refined to get tighter upper bounds:
$2n - \Theta(\log n)$ by Ilie~\cite{2007Ilie_NoteOnNumberOfSquar},
$\frac{95}{48}n$ by Lam~\cite{2013Lam_NumberOfSquarInStrin},
$\frac{11}{6}n$ by Deza et al.~\cite{2015DezaFT_HowManyDoublSquarCan}, and
$1.5n$ by Thierry~\cite{2020Thierry_ProofThatWordOfLengt_X}.

Aside from this line of research, some variations of the conjecture have been raised:
Deza et al.~\cite{2011DezaFJ_DStepApproacForDistin_CPM} took the number $\sigma$ of distinct letters used in $w$ as a parameter and conjectured that $|\sqs{w}| \le w - \sigma$, 
which was updated in~\cite{2014DezaF_DStepApproacToMaxim_DAM} to a slightly stronger conjecture 
$|\sqs{w}| \le n - \sigma - \Theta(\log (n - \sigma))$.\footnote{
  In~\cite{2011DezaFJ_DStepApproacForDistin_CPM,2014DezaF_DStepApproacToMaxim_DAM}, only primitively rooted squares are considered.
}
Also, Jonoska et al.~\cite{2014JonoskaMS_StronSquarConjecOnBinar_SOFSEM} conjectured that 
$|\sqs{w}| \le \frac{2k-1}{2k+2}n$ for a binary word $w$, 
where $k \le n/2$ is the number of occurrences of the less frequent letter in $w$.

Recently, Brlek and Li~\cite{2025BrlekL_NumberOfSquarInFinit} proved that $|\sqs{w}| \le n - \sigma$,
breaking a barrier with a new approach based on Rauzy graphs~\cite{1982Rauzy_SuitesATermesDansUn}.
The Rauzy graph $\rauzy{w}(\ell)$ of order $\ell$ of $w$ is a directed graph such that
each vertex corresponds to a factor of length $\ell$ of $w$
and each arc corresponding to a factor of length $\ell+1$ of $w$
connects its prefix and suffix of length $\ell$.
Brlek and Li~\cite{2025BrlekL_NumberOfSquarInFinit} showed that 
there is an injection from $\sqs{w}$ to an independent cycle set 
of the graph $\bigcup_{\ell = 1}^{n} \rauzy{w}(\ell)$, 
which is a collection of Rauzy graphs of orders from $1$ to $n$.
The result $|\sqs{w}| \le n - \sigma$ is obtained 
because the maximum size of an independent cycle set of
$\bigcup_{\ell = 1}^{n} \rauzy{w}(\ell)$ is $n - \sigma$.
Furthermore, the techniques have been extended to show $|\sqs{w}| \le n - \Theta(\log n)$ in~\cite{2023BrlekL_NumberOfDistinSquarIn_WORDS}, and
to prove the conjecture of~\cite{2014JonoskaMS_StronSquarConjecOnBinar_SOFSEM} affirmatively in~\cite{2024Li_UpperBoundOfNumberOf}.

As for lower bounds of maximum $|\sqs{w}|$ over all words $w$ of length $n$,
Fraenkel and Simpson~\cite{1998FraenkelS_HowManySquarCanStrin} showed how to build 
binary words that contain $n - \Theta(\sqrt{n})$ squares.
Jonoska et al.~\cite{2014JonoskaMS_StronSquarConjecOnBinar_SOFSEM} showed 
that the same asymptotic lower bound can be obtained from structurally simpler binary words.
Since the leading term of both upper and lower bounds of maximum $|\sqs{w}|$ meets at $n$,
the maximum density $\frac{|\sqs{w}|}{n}$ of distinct squares converges to $1$ as $n$ increases.
The result of Manea and Seki~\cite{2015ManeaS_SquarDensitIncreasMappin_WORDS} implies that 
the convergence is valid for any $\sigma \ge 2$ and the limit is never reached.

Although the leading term $n$ in the upper bound is optimal in general,
it is expected that lower terms in the current best upper bound $n - \Theta(\log n)$ of~\cite{2023BrlekL_NumberOfDistinSquarIn_WORDS} can be lowered further.
Based on computational results of $|\sqs{w}|$ for binary words (listed as A248958 in OEIS~\cite{2026OEIS_A248958}),
Brlek and Li~\cite{2023BrlekL_NumberOfDistinSquarIn_WORDS} posed the following conjecture:
\begin{equation*}
  |\sqs{w}| \le \ceil{n + 1 - \sqrt{n} - \log_{2} \left( \sqrt{n} \right)}.
\end{equation*}
To pursue this quest, we aim to present another proof for $|\sqs{w}| \le n - \Theta(\log n)$.

We also describe the connection between cycles and squares with a slightly different perspective from the previous work~\cite{2023BrlekL_NumberOfDistinSquarIn_WORDS,2024Li_NoteOnLieComplAnd,2024LiPR_NoteOnMaximNumberOf,2025BrlekL_NumberOfSquarInFinit}:
Instead of identifying interesting cycles by comparing their lengths with the order $\ell$ of Rauzy graphs, we classify the cycles based on Lyndon words.
We restate and prove the existing lemmas in this perspective to make this paper self-contained.

\section{Preliminaries}\label{sec:prelim}
For two integers $b$ and $e$, let $[b..e]$ denote the set $\{ b, b+1, \dots, e \}$ of integers if $b \le e$, and otherwise, the empty set.
We also use $[b..]$ to denote the set $\{ b, b+1, \dots, \}$ of integers that are larger than or equal to $b$.

A finite \emph{word} $w = w[1]w[2] \cdots w[|w|]$ is a sequence of letters, where $|w| \in [0..]$ denotes the length of $w$ and, for $i \in [1..|w|]$, $w[i]$ denotes the $i$-th letter of $w$.
The \emph{empty word} $\emptystr$ is the word of length $0$.
For integers $1 \le b \le e \le |w|$, the \emph{factor} of $w$ beginning at $b$ and ending at $e$ is denoted by $w[b..e] = w[b]w[b+1] \cdots w[e]$.
For convenience, let $w[b..e]$ with $b > e$ be the empty word.
A \emph{prefix} (resp. \emph{suffix}) of $w$ is a factor that matches $w[1..e]$ (resp. $w[b..|w|]$) for some $e \in [0..|w|]$ (resp. $b \in [1..|w|+1]$).
The \emph{concatenation} $xy$ of two words $x$ and $y$ is an associative binary operation such that $(xy)[1..|x|] = x$ and $(xy)[|x|+1..|xy|] = y$.

Let $\preceq$ denote an arbitrary defined total order over letters, 
which is also extended to represent the lexicographic order over words:
For any two words $x$ and $y$, $x \preceq y$ if and only if 
either $x$ is a prefix of $y$ or there exists an integer $\ell \ge 0$ such that $x[1..\ell] = y[1..\ell]$ and $x[\ell+1] \prec y[\ell+1]$.

For any word $x$ and integer $m \ge 0$,
let $x^{m/|x|}$ denote the word of length $m$ such that $x^{m/|x|}[i] = x[j]$ for any $i \in [1..m]$ and $j \in [1..|x|]$ with $j \equiv i \mod m$.
If $w = x^{|w|/|x|}$ with $|x| \le |w|$, the integer $|x|$ is called a \emph{period} of $w$.
A word $w$ is called \emph{primitive} if it is not written as $w = x^k$ for some word $x$ and $k \in [2..]$.
A non-empty word is a \emph{$k$-power} if it is of the form $x^k$ for some word $x$ and $k \in [2..]$ (here $x$ is not necessarily primitive).
In particular, a 2-power is called a \emph{square}.

For any words $x$ and $y$, two words $xy$ and $yx$ are said to be \emph{conjugate}.
For any word $z$, the \emph{conjugacy class} $\conj{z}$ is defined as follows:
\begin{equation*}
  \conj{z} = \{ z[i..|z|]z[1..i-1] \mid i \in [1..|z|] \}
\end{equation*}
For any integer $m \ge 0$, we also define 
\begin{equation*}
  \conj{z}_{m} = \{ x^{m/|x|} \mid x \in \conj{z} \}.
\end{equation*}
Remark that we allow $m$ to be smaller than $|z|$.
Intuitively, $\conj{z}_{m}$ is the set of factors of length $m$ in the infinite word $z^{\infty}$ that repeats $z$.

\begin{example}
  For $z = \idtt{aab}$, $\conj{z}_{0} = \{ \emptystr \}$, $\conj{z}_{1} = \{ \idtt{a}, \idtt{b} \}$, $\conj{z}_{2} = \{ \idtt{aa}, \idtt{ab}, \idtt{ba} \}$, $\conj{z}_{3} = \conj{z} = \{ \idtt{aab}, \idtt{aba}, \idtt{baa} \}$, $\conj{z}_{4} = \{ \idtt{aaba}, \idtt{abaa}, \idtt{baab} \}$, and $\conj{z}_{5} = \{ \idtt{aabaa}, \idtt{abaab}, \idtt{baaba} \}$.
\end{example}

We will mainly consider $\conj{z}$ and $\conj{z}_{m}$ for primitive words,
choosing \emph{Lyndon words} as a prominent representative of the conjugacy class for a primitive word.
\begin{definition}[\cite{1954Lyndon_BurnsSProbl,1997Lothaire_CombinOnWords}]\label{define:lyndon}
  A primitive word $z$ is called a \emph{Lyndon word} 
  if $z$ is the lexicographically smallest word in $\conj{z}$, or equivalently, 
  $z \prec z[i..|z|]$ holds for any $i \in [2..|z|]$.
\end{definition}
\begin{lemma}\label{lemma:lyndon_period}
  The smallest period of a Lyndon word $z$ is $|z|$.
\end{lemma}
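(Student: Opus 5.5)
We argue by contradiction: suppose a Lyndon word $z$ has a period $p$ with $p < |z|$. The plan is to show that some proper suffix of $z$ is then lexicographically no larger than $z$, which contradicts Definition~\ref{define:lyndon}. We read ``period $p$'' in the usual sense that $z[i] = z[i+p]$ for all $i \in [1..|z|-p]$; this is equivalent to $z = x^{|z|/|x|}$ for $x = z[1..p]$. (The paper's definition writes $j \equiv i \mod m$, but the intended modulus is $|x|$.)

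First, take the suffix $s = z[p+1..|z|]$, which is non-empty and proper because $1 \le p < |z|$. By periodicity we have $s[i] = z[p+i] = z[i]$ for every $i \in [1..|z|-p]$. So $s = z[1..|z|-p]$, meaning $s$ is simultaneously a proper suffix and a prefix of $z$, i.e.\ a border of $z$.

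Second, compare $s$ and $z$ lexicographically. Since $s$ is a prefix of $z$, the definition of $\preceq$ gives $s \preceq z$, and since $|s| < |z|$ this is strict: $s \prec z$. On the other hand, Definition~\ref{define:lyndon} requires $z \prec z[i..|z|]$ for every $i \in [2..|z|]$, and taking $i = p+1 \in [2..|z|]$ yields $z \prec s$. These two inequalities are incompatible.

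Hence no period smaller than $|z|$ exists. The trivial period $|z|$ always exists (take $x = z$), so the smallest period of $z$ is exactly $|z|$. The only step needing care is checking that $s$ is a proper, non-empty suffix; this is guaranteed by $1 \le p < |z|$. Alternatively, one could argue through primitivity and conjugates, but the border argument is the shortest route.
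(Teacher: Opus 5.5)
Your proof is correct and follows essentially the same route as the paper. Both arguments note that a period $p<|z|$ makes the nonempty proper suffix $z[p+1..|z|]$ equal to the prefix $z[1..|z|-p]$, hence strictly smaller than $z$, which contradicts the Lyndon property. Your remark that the modulus in the paper's definition of $x^{m/|x|}$ should be $|x|$ rather than $m$ is also correct.
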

\begin{proof}
  If $z$ has a period $p$ smaller than $|z|$, then its suffix $z[p+1..|z|]$ of length $|z| - p > 0$ is lexicographically smaller than $z$ because $z[p+1..|z|] = z[1..|z|-p] \prec z$, a contradiction.
\end{proof}

Since any $k$-power (including squares for $k = 2$) can be written as $x^{kr}$ for some primitive word $x$ and integer $r \ge 1$,
we associate it uniquely with the Lyndon word $z$ such that $x \in \conj{z}$.
We call such $z$ the \emph{Lyndon root} of the $k$-power.

For any word $w$, we define the following sets of words:
\begin{itemize}
  \item $\factors{w} = \{ x \mid w = uxv \}$: The set of factors of $w$.
  \item $\factors{w}(\ell) = \{ x \in \factors{w} \mid |x| = \ell \}$: The set of factors of length $\ell \ge 0$ in $w$.
  \item $\lynfac{w} = \{ z \in \factors{w} \mid z \text{~is a Lyndon word} \}$: The set of Lyndon factors of $w$.
  \item $\sqs{w} = \{ x \in \factors{w} \mid x \text{~is a square} \}$: The set of squares in $w$.
  \item $\sqs{w}(z) = \{ x^{2r} \in \sqs{w} \mid x \in \conj{z}, r \in [1..] \}$:
        The set of squares in $w$ with Lyndon root $z \in \lynfac{w}$.
\end{itemize}

\begin{remark}
  For any word $w$, $\sqs{w} = \bigcup_{z \in \lynfac{w}} \sqs{w}(z)$ and $|\sqs{w}| = \sum_{z \in \lynfac{w}} |\sqs{w}(z)|$.
\end{remark}

A \emph{directed graph} consists of two distinct sets of vertices and arcs, where each arc connects two vertices with the direction from its \emph{initial vertex} to its \emph{terminal vertex}.\footnote{
  In our work, we need \emph{multigraphs} to deal with the Rauzy graph of order $0$, 
  thus allow distinct arcs to have the same pair of initial and terminal vertices.
  We also allow a \emph{loop}, an arc whose initial vertex coincides with its terminal vertex.
}
A sequence $(u_1, u_2, \dots, u_m)$ of arcs is called a \emph{path} of length $m$ if the terminal vertex of $u_i$ coincides with the initial vertex of $u_{i+1}$ for any $i \in [1..m-1]$.
The path is called a \emph{circuit} if the terminal vertex of $u_m$ coincides with the initial vertex of $u_1$.
In paths and circuits, we consider moving from the initial vertex of $u_1$ to the terminal vertex of $u_m$ while following the direction of arcs in the sequence.
If we are allowed to traverse arcs also in the opposite direction, the graph is treated as an \emph{undirected} graph.
In the undirected case, the counterparts of paths and circuits are called \emph{chains} and \emph{cycles}, respectively (see~\cite{Berge1982TheorOfGraphAndIts,2025BrlekL_NumberOfSquarInFinit} for a more formal definition).
Note that a path (resp. circuit) is always a chain (resp. cycle) in this terminology.
A graph is called \emph{weakly connected} if, for every pair of distinct vertices, there is a chain going from one to the other.
A maximal weakly connected subgraph is called a \emph{component}.

\begin{definition}
  For a graph with the arc set $\{ u_1, u_2, \dots, u_{n_{\idsf{a}}} \}$,
  the \emph{cycle-vector} of a cycle $C$ is the vector $\mu(C) = (c_1, c_2, \dots, c_{n_{\idsf{a}}})$ 
  in the $n_{\idsf{a}}$-dimensional space such that $c_i = r_i - s_i$, where
  $r_i$ and $s_i$ are the numbers of times $u_i$ is traversed in one direction and respectively in the opposite direction in the cycle.
\end{definition}
\begin{definition}
  A set $\{ C_1, C_2, \dots, C_k \}$ of cycles is called \emph{independent} if their cycle-vectors are linearly independent, i.e., $\alpha_1 \mu(C_1) + \alpha_2 \mu(C_2) + \dots + \alpha_{k} \mu(C_k) = \mathbf{0}$ implies that $\alpha_1 = \alpha_2 = \dots = \alpha_k = 0$.
\end{definition}  

\begin{theorem}[Theorem 2, Chapter 4 in \cite{Berge1982TheorOfGraphAndIts}]\label{theo:cyclomatic}
  For a multigraph with $n_{\idsf{v}}$ vertices, $n_{\idsf{a}}$ arcs and $n_{\idsf{c}}$ components,
  the \emph{cyclomatic number} of the graph, defined to be $n_{\idsf{a}} - n_{\idsf{v}} + n_{\idsf{c}}$, is equal to the maximum size of an independent cycle set.
\end{theorem}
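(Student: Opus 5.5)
The plan is to realize every cycle-vector as an element of one fixed linear subspace of $\mathbb{R}^{n_{\idsf{a}}}$ whose dimension is exactly $n_{\idsf{a}} - n_{\idsf{v}} + n_{\idsf{c}}$. This gives the upper bound on the size of an independent cycle set. Fundamental cycles of a spanning forest then give an independent set of that size, which is the matching lower bound.

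\emph{Upper bound.} Let $B$ be the $n_{\idsf{v}} \times n_{\idsf{a}}$ incidence matrix. For an arc $u_j$ that is not a loop, its column has entry $+1$ at the terminal vertex, $-1$ at the initial vertex, and $0$ elsewhere. The column of a loop is zero. Parallel arcs simply give equal columns, so multigraphs cause no trouble.

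First I will check that $B\mu(C) = \mathbf{0}$ for every cycle $C$. Each time $C$ passes through a vertex $v$, it enters $v$ along one arc and leaves along the next. The signed contributions of these two steps to the $v$-th coordinate of $B\mu(C)$ cancel, whichever direction each arc is traversed in. This holds even when arcs or vertices are repeated. A loop contributes nothing, consistent with its zero column.

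Hence every cycle-vector lies in $\ker B$, and an independent cycle set has size at most $\dim \ker B = n_{\idsf{a}} - \operatorname{rank} B$. It remains to show $\operatorname{rank} B = n_{\idsf{v}} - n_{\idsf{c}}$, i.e.\ $\dim \ker B^{\top} = n_{\idsf{c}}$. Suppose $y^{\top} B = \mathbf{0}$. For every non-loop arc this says $y$ has equal values at its two endpoints. Propagating along chains, $y$ is constant on each component. Conversely, any vector constant on each component satisfies $y^{\top} B = \mathbf{0}$. So $\ker B^{\top}$ is spanned by the $n_{\idsf{c}}$ indicator vectors of the components, as required.

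\emph{Lower bound.} In each component choose a spanning tree. The tree of a component with $n'_{\idsf{v}}$ vertices has $n'_{\idsf{v}} - 1$ arcs, so the union $F$ of these trees has $n_{\idsf{v}} - n_{\idsf{c}}$ arcs. Every arc $e \notin F$, including loops and parallel copies, closes a unique cycle $C_e$: either $e$ alone if it is a loop, or $e$ followed by the tree chain between its endpoints. The vector $\mu(C_e)$ has entry $\pm 1$ at $e$ and $0$ at every other non-forest arc. Restricting to the coordinates of non-forest arcs therefore gives a signed identity matrix, so the $n_{\idsf{a}} - n_{\idsf{v}} + n_{\idsf{c}}$ cycles $C_e$ are independent. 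Together with the upper bound, this proves the theorem.

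The main point needing care is the claim $B\mu(C)=\mathbf{0}$ for cycles that revisit vertices, traverse arcs against their orientation, or use loops. I would handle it by the local enter/leave cancellation at each visit described above, rather than by any global argument. The rank computation for $B$ is standard once $\ker B^{\top}$ is identified as the span of the component indicators.
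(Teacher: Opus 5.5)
Your proof is correct. The paper does not prove this statement at all; it imports it from Berge. Berge's classical treatment builds the multigraph by adding arcs one at a time, and each new arc either merges two components or closes a new cycle through itself. That is essentially your fundamental-cycle lower bound in incremental form.

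Where your argument is self-contained and different in flavour is the upper bound. You place every cycle-vector in $\ker B$ and obtain $\dim \ker B = n_{\idsf{a}} - n_{\idsf{v}} + n_{\idsf{c}}$ from a single rank computation, with $\ker B^{\top}$ spanned by the component indicators. Your lower bound then shows that the span of the cycle-vectors is exactly $\ker B$.

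Your telescoping check of $B\mu(C)=\mathbf{0}$ (each step contributes $+1$ at its end vertex and $-1$ at its start vertex) covers arbitrary closed walks. These may repeat vertices or arcs and traverse arcs backwards, which matches the paper's multiplicity-counting definition of $\mu(C)$. Loops are harmless since their columns vanish.

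The paper only ever uses the upper-bound half of the theorem, through \cref{lemma:rauzy_max_cycle} feeding \cref{lemma:size_union_cs}. So your kernel argument alone would make the paper self-contained on this point.
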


\begin{definition}[Rauzy graph~\cite{1982Rauzy_SuitesATermesDansUn}]
  The \emph{Rauzy graph} $\rauzy{w}(\ell)$ of order $\ell \in [0..|w|]$ for a word $w$ is defined on the vertex set $\factors{w}(\ell)$ and the arc set $\factors{w}(\ell+1)$ such that each arc $u \in \factors{w}(\ell+1)$ begins at vertex $u[1..\ell] \in \factors{w}(\ell)$ and ends at vertex $u[2..|u|] \in \factors{w}(\ell)$.
\end{definition}
Let $\rauzy{w} = \bigcup_{\ell = 0}^{|w|} \rauzy{w}(\ell)$ be the graph that collects Rauzy graphs of all orders $\ell \in [0..|w|]$ in one graph with the vertex set $\factors{w}$, the arc set $\factors{w} \setminus \{ \emptystr \}$ and $|w|+1$ components.
Note that every word $x \in \factors{w} \setminus \{ \emptystr \}$ appears as a vertex and also as an arc in $\rauzy{w}$, but they are distinguished from each other.
See \cref{fig:rauzy} for an illustration.

\begin{lemma}\label{lemma:rauzy_max_cycle}
  For any word $w$, the maximum size of an independent cycle set of $\rauzy{w}$ is $|w|$.
\end{lemma}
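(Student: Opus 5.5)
We apply \cref{theo:cyclomatic} to the multigraph $\rauzy{w}$. It is enough to compute $n_{\idsf{a}} - n_{\idsf{v}} + n_{\idsf{c}}$ for $\rauzy{w}$ and show that this quantity equals $|w|$.

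\textbf{Vertices and arcs.} By definition, the vertex set of $\rauzy{w}$ is $\factors{w}$, so $n_{\idsf{v}} = |\factors{w}|$. The arc set is $\bigcup_{\ell=0}^{|w|} \factors{w}(\ell+1)$. Since $\factors{w}(|w|+1) = \emptyset$, this union is $\factors{w} \setminus \{\emptystr\}$, and so $n_{\idsf{a}} = |\factors{w}| - 1$. Thus $n_{\idsf{a}} - n_{\idsf{v}} = -1$. The arcs of $\rauzy{w}(\ell)$ and those of $\rauzy{w}(\ell')$ for $\ell \neq \ell'$ are distinct words, so no arc is double-counted.

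\textbf{Components.} The key step is to verify $n_{\idsf{c}} = |w|+1$. Equivalently, each $\rauzy{w}(\ell)$ with $\ell \in [0..|w|]$ is nonempty and weakly connected, and no arc joins different orders.

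The last part is immediate, because every arc of $\rauzy{w}(\ell)$ joins two vertices of length $\ell$.

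Nonemptiness holds because $w[1..\ell] \in \factors{w}(\ell)$.

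For weak connectivity, fix $\ell$ and list the factors $w[i..i+\ell-1]$ for $i \in [1..|w|-\ell+1]$. Every vertex occurs in this list. Whenever $i \le |w|-\ell$, the arc $w[i..i+\ell] \in \factors{w}(\ell+1)$ goes from $w[i..i+\ell-1]$ to $w[i+1..i+\ell]$. So the occurrence sequence traces a path through all vertices of $\rauzy{w}(\ell)$, and the graph is (even strongly along this walk) weakly connected. For $\ell = 0$, the unique vertex $\emptystr$ has $\sigma$ loops, which is why multigraphs are needed. For $\ell = |w|$, the graph is a single vertex with no arcs.

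\textbf{Conclusion.} Combining the counts gives $n_{\idsf{a}} - n_{\idsf{v}} + n_{\idsf{c}} = -1 + |w| + 1 = |w|$, and \cref{theo:cyclomatic} yields the claim. There is no real obstacle. The only point needing care is the component count, which rests on the walk argument given by consecutive occurrences, together with the boundary cases $\ell=0$ (loops in a multigraph) and $\ell=|w|$ (no arcs).
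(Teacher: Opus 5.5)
Your proof is correct and is essentially the paper's argument: count $|\factors{w}|$ vertices, $|\factors{w}|-1$ arcs and $|w|+1$ components, then apply \cref{theo:cyclomatic}. You also justify the component count with the walk through consecutive occurrences, which the paper simply asserts. One small wording point: drop the parenthetical ``even strongly'', because that walk gives a directed path through all vertices but not strong connectivity (e.g.\ $w=\idtt{ab}$, $\ell=1$).
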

\begin{proof}
  Since $\rauzy{w}$ has $|\factors{w}|$ vertices, $|\factors{w} \setminus \{ \emptystr \}|$ arcs and $|w|+1$ components,
  its cyclomatic number is $|\factors{w} \setminus \{ \emptystr \}| - |\factors{w}| + |w| + 1 = |w|$,
  which is the maximum size of an independent cycle set of $\rauzy{w}$ by \cref{theo:cyclomatic}.
\end{proof}

\begin{figure}[t]
  \centering
  \includegraphics[
    width=0.95\linewidth
    ]{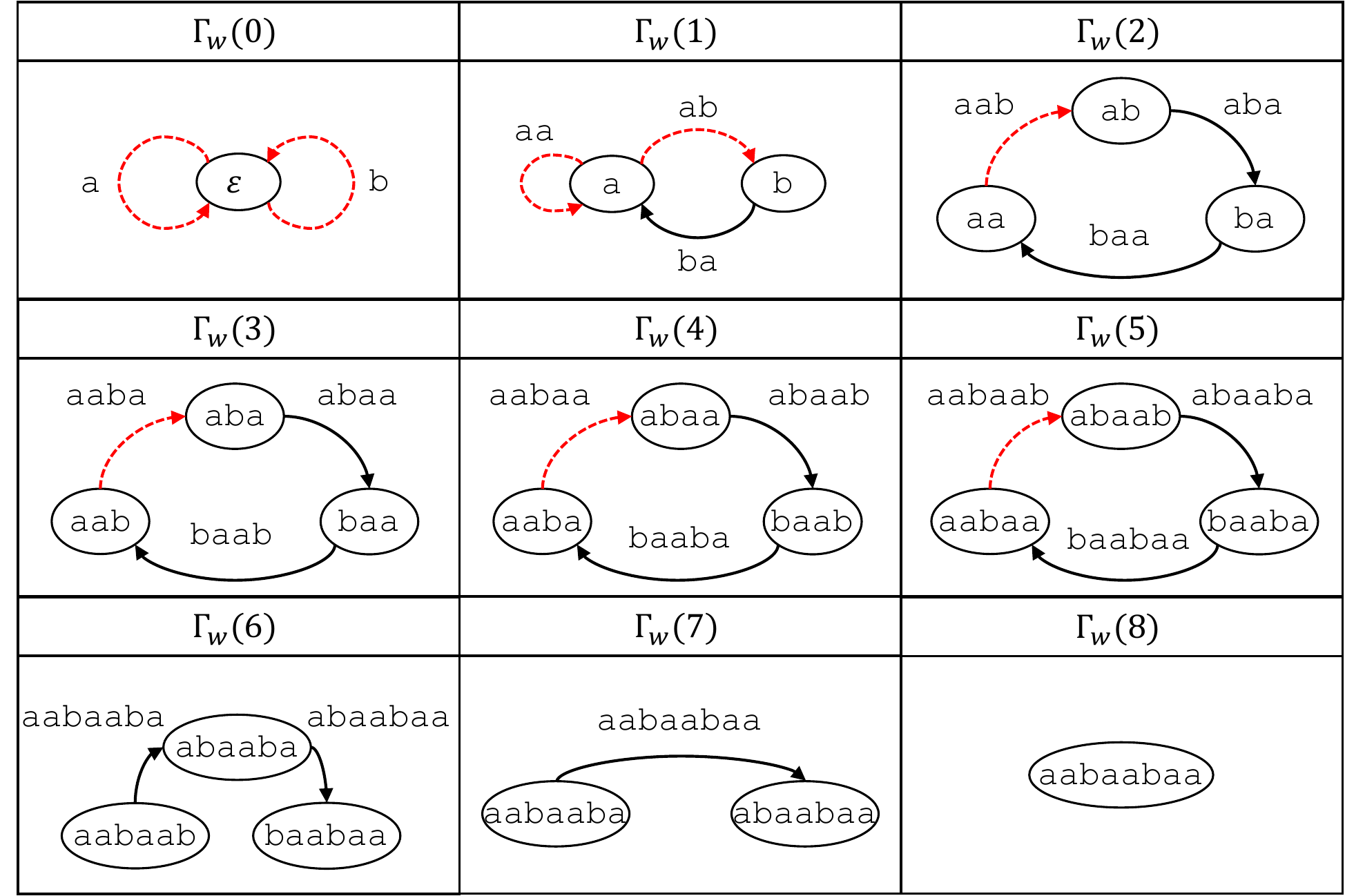}
  \caption{Illustration for the Rauzy graph $\Gamma_{w}(\ell)$ of all orders $\ell \in [0..8]$ for a word $w = \texttt{aabaabaa}$ of length $8$.
  Each dashed arc is the smallest arc in a circuit in $\cs{w}(z)$ for some $z \in \lynfac{w}$ (defined in \cref{def:cs}), where
  $\cs{w}(\idtt{a}) = \{ (\idtt{a}), (\idtt{aa}) \}$, $\cs{w}(\idtt{b}) = \{ (\idtt{b}) \}$, $\cs{w}(\idtt{ab}) = \{ (\idtt{ab}, \idtt{ba}) \}$ and $\cs{w}(\idtt{aab}) = \{ (\idtt{aab}, \idtt{aba}, \idtt{baa}), (\idtt{aaba}, \idtt{abaa}, \idtt{baab}), (\idtt{aabaa}, \idtt{abaab}, \idtt{baaba}), (\idtt{aabaab}, \idtt{abaaba}, \idtt{baabaa})\}$.
  In \cref{sec:proof}, we will assign $\sqs{w}(\idtt{a}) = \{ \idtt{aa} \}$ to $\cs{w}(\idtt{a})$ and $\sqs{w}(\idtt{aab}) = \{ \idtt{aabaab}, \idtt{abaaba}, \idtt{baabaa} \}$ to $\cs{w}(\idtt{aab})$.
  }
  \label{fig:rauzy}
\end{figure}

\section{Another proof for $|\sqs{w}| \le |w| - \Theta(\log |w|)$}\label{sec:proof}
Brlek and Li~\cite{2025BrlekL_NumberOfSquarInFinit} showed that 
there is an injection from $\sqs{w}$ to an independent cycle set in $\bigcup_{\ell = 1}^{|w|} \rauzy{w}(\ell)$.
More precisely, the squares in $\sqs{w}(z)$ for a Lyndon word $z$ are mapped to $|\sqs{w}(z)|$ circuits, 
each of which exists in $\rauzy{w}(\ell)$ for $\ell \in [|z|..|z|+|\sqs{w}(z)|-1]$.
In~\cite{2023BrlekL_NumberOfDistinSquarIn_WORDS,2024LiPR_NoteOnMaximNumberOf}, 
the independent cycle set is extended by adding one more circuit (called primitive circuits therein) that exists in $\rauzy{w}(|z|-1)$.
Although the proof for $|\sqs{w}| \le |w| - \Theta(\log |w|)$ in~\cite{2023BrlekL_NumberOfDistinSquarIn_WORDS} is based on another extension,
we show that the same upper bound can be achieved with the primitive circuits.

We start with two simple lemmas (\cref{lemma:conjm_circuit,lemma:conjm_reduce}),
which will be used to identify some circuits with the arc set $\conj{z}_{m}$ spreading across Rauzy graphs of different orders.
\begin{lemma}\label{lemma:conjm_reduce}
  For any word $w$ and $z \in \lynfac{w}$ with $\conj{z}_{m} \subseteq \factors{w}$ for some integer $m > 0$, $\conj{z}_{m-1} \subseteq \factors{w}$ holds.
\end{lemma}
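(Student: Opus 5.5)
The plan is to show directly that every element of $\conj{z}_{m-1}$ is a prefix of some element of $\conj{z}_{m}$. Since $\factors{w}$ is closed under taking factors (in particular prefixes), the inclusion $\conj{z}_{m-1} \subseteq \factors{w}$ then follows immediately from the hypothesis $\conj{z}_{m} \subseteq \factors{w}$.

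First I fix an arbitrary element $y \in \conj{z}_{m-1}$. By definition, $y = x^{(m-1)/|x|}$ for some conjugate $x \in \conj{z}$. I take the same $x$ and consider $y' = x^{m/|x|} \in \conj{z}_{m}$.

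The key step is to check that $y'[1..m-1] = y$. This follows from the defining property that the $i$-th letter of $x^{m/|x|}$ depends only on $i$ modulo $|x|$. (The paper's displayed definition writes $j \equiv i \bmod m$, but the intended reading is modulo $|x|$.) So for every $i \in [1..m-1]$, the letters of $x^{m/|x|}$ and $x^{(m-1)/|x|}$ at position $i$ are both $x[j]$ with $j \equiv i \pmod{|x|}$. This holds regardless of whether $m$ is smaller or larger than $|z|$, and the edge case $m = 1$ gives $y = \emptystr$, which is trivially a factor of $w$.

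Finally, since $y' \in \conj{z}_{m} \subseteq \factors{w}$, we can write $w = u y' v$, so $w = u\, y \,(y'[m..m]\, v)$ and $y \in \factors{w}$. Because $y$ was arbitrary, $\conj{z}_{m-1} \subseteq \factors{w}$.

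There is no real obstacle; the only care needed is reading the definition of $x^{m/|x|}$ correctly, i.e.\ as periodic extension modulo $|x|$. Note also that the Lyndon property of $z$ is not used, only that $z$ is non-empty.
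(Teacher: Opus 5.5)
Your proof is correct and is essentially the paper's argument: the paper also takes $x^{(m-1)/|x|}$, observes that it is a prefix of $x^{m/|x|} \in \conj{z}_{m} \subseteq \factors{w}$, and concludes that it is a factor of $w$. Your remark that the congruence in the definition of $x^{m/|x|}$ should be read modulo $|x|$ rather than modulo $m$ is also right.
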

\begin{proof}
  For any word $u \in \conj{z}_{m-1}$, there is a word $x \in \conj{z}$ such that $u = x^{(m-1)/|x|}$.
  Since $u$ is a prefix of $x^{m/|x|} \in \conj{z}_{m} \subseteq \factors{w}$,
  $u$ must be a factor of $w$.
\end{proof}
\begin{lemma}\label{lemma:conjm_circuit}
  For any word $w$ and $z \in \lynfac{w}$ with $\conj{z}_{m} \subseteq \factors{w}$ for some integer $m > 0$, $\rauzy{w}(m-1)$ contains a circuit whose arc set is $\conj{z}_{m}$.
\end{lemma}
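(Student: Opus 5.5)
The plan is to write down the circuit explicitly by walking once around the infinite periodic word $z^{\infty}$. Let $p = |z|$ and, for $i \in [1..p]$, let $x_i = z[i..p]z[1..i-1] \in \conj{z}$ be the rotation of $z$ starting at position $i$, with indices taken cyclically so that $x_{p+1} = x_1$. Put $u_i = x_i^{m/p}$ for $i \in [1..p]$. By definition $\conj{z}_{m} = \{u_1, \dots, u_p\}$. Each $u_i$ has length $m$ and lies in $\factors{w}$ by hypothesis; since $m \le |w|$, it is an arc of $\rauzy{w}(m-1)$. The candidate circuit is the sequence $(u_1, u_2, \dots, u_p)$.

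The key step is to check consecutiveness. The terminal vertex of $u_i$ is its suffix $u_i[2..m]$ of length $m-1$. Since $u_i$ is the length-$m$ factor of $z^{\infty}$ starting at position $i$ (modulo $p$), this suffix is the length-$(m-1)$ factor of $z^{\infty}$ starting at position $i+1$. That factor is $x_{i+1}^{(m-1)/p}$. On the other hand, the initial vertex of $u_{i+1}$ is its prefix $u_{i+1}[1..m-1] = x_{i+1}^{(m-1)/p}$. The two coincide, so the sequence is a path. For $i = p$ we use $x_{p+1} = x_1$: the terminal vertex of $u_p$ equals the initial vertex of $u_1$, so the path closes up into a circuit. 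Its arc set is exactly $\{u_1, \dots, u_p\} = \conj{z}_{m}$.

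The only point needing care, and the one I expect to be the main (mild) obstacle, is bookkeeping the periodic indexing in $x^{m/|x|}$. Concretely, one must verify the identity $x_i^{m/p}[2..m] = x_{i+1}^{(m-1)/p}$ directly from the definition $x^{m/|x|}[j] = x[j']$ with $j' \equiv j \pmod{|x|}$. This holds for all $m \ge 1$, including $m < p$ and the degenerate case $m = 1$. In that case $\rauzy{w}(0)$ has the single vertex $\emptystr$, and the $u_i$ are loops.

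When $m < p$, some of the $u_i$ may coincide, so the circuit may traverse an arc more than once. This is permitted, since circuits in the paper are just sequences of arcs. The statement only concerns the arc set, which is still $\conj{z}_{m}$. If a circuit without repeated arcs were desired, note that $u_i$ determines its terminal vertex; one could then extract a subsequence between two occurrences of the same arc. However, this is not needed for the lemma as stated.
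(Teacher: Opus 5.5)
Your proposal is correct and follows essentially the same argument as the paper's proof. The paper also uses the rotations $x_i$ of $z$, checks that the terminal vertex $x_i^{m/|z|}[2..m] = x_{i+1}^{(m-1)/|z|}$ equals the initial vertex of $x_{i+1}^{m/|z|}$, and closes the circuit by wrapping $x_{|z|}$ back to $x_1$. Your remarks on $m < |z|$ (repeated arcs) and on $m = 1$ (loops at $\emptystr$) match the paper's accompanying remark.
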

\begin{proof}
  Let $x_i = z[i..|z|]z[1..i-1]$ for any $i \in [1..|z|]$.
  Then, $\conj{z}_m = \{x_1^{m/|z|}, x_2^{m/|z|}, \dots, x_{|z|}^{m/|z|}\}$.
  We show that the sequence $(x_1^{m/|z|}, x_2^{m/|z|}, \dots, x_{|z|}^{m/|z|})$ of arcs in $\rauzy{w}(m-1)$ is a circuit:
  For any $i \in [1..|z|-1]$, 
  the terminal vertex of the arc $x_{i}^{m/|z|}$ is $x_{i}^{m/|z|}[2..m] = x_{i+1}^{(m-1)/|z|} = x_{i+1}^{m/|z|}[1..m-1]$, which coincides with the initial vertex of the next arc $x_{i+1}^{m/|z|}$.
  Similarly, the terminal vertex of the arc $x_{|z|}^{m/|z|}$ is $x_{|z|}^{m/|z|}[2..m] = x_{1}^{(m-1)/|z|} = x_{1}^{m/|z|}[1..m-1]$, which coincides with the initial vertex of $x_{1}^{m/|z|}$.
\end{proof}

\begin{remark}
  If $m < |z|$, the circuit of length $|z|$ considered in the proof of \cref{lemma:conjm_circuit} may pass through a vertex twice and/or use an arc twice.
  On the other hand, for the case with $m \ge |z|$, it is possible to prove that the circuit passes through each vertex exactly once using Fine and Wilf's Periodicity Lemma~\cite{1965FineW_UniquenTheorForPeriodFunct}, 
  but we omit the proof because we do not use the property in this paper.
\end{remark}

\begin{definition}\label{def:cs}
  For any word $w$ and $z \in \lynfac{w}$, 
  let $\cs{w}(z)$ be the set of circuits in $\rauzy{w}$ such that 
  the set of arcs of a circuit in $\cs{w}(z)$ is $\conj{z}_{m}$ for some integer $m \ge |z|$.
\end{definition}

We aim to assign $\sqs{w}(z)$ exclusively to $\cs{w}(z)$.\footnote{
  Note that $\cs{w}(z)$ is well-defined even for $z \in \lynfac{w}$ with $|\sqs{w}(z)| = 0$.
}
The next lemma shows that the sets of $\cs{w}(z)$ are disjoint for any pair of Lyndon factors
and their union constitutes an independent cycle set of $\rauzy{w}$:
\begin{lemma}\label{lemma:independent_cycle_set}
  For any word $w$, the set $\bigcup_{z \in \lynfac{w}} \cs{w}(z)$ contains $\sum_{z \in \lynfac{w}} |\cs{w}(z)|$ circuits, and is an independent cycle set of $\rauzy{w}$.
\end{lemma}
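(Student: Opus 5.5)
Each circuit in $\bigcup_{z \in \lynfac{w}} \cs{w}(z)$ is identified by its arc set $\conj{z}_{m}$ with $m \ge |z|$. I will (i) show that distinct pairs $(z,m)$ give distinct circuits, which yields both the disjointness and the count $\sum_{z} |\cs{w}(z)|$; and (ii) show linear independence of the cycle-vectors by a triangular argument on a chosen ``smallest arc'' of each circuit, as suggested by the dashed arcs in \cref{fig:rauzy}.

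For (i), the arcs of a circuit with arc set $\conj{z}_{m}$ all have length $m$, so they lie in $\rauzy{w}(m-1)$, and two circuits with different $m$ are different. For fixed $m$, suppose $\conj{z}_{m} = \conj{z'}_{m}$ with $m \ge |z|, |z'|$. Take $x \in \conj{z}_{m}$. Since $m \ge |z|$, $x$ has period $|z|$ and its prefix of length $|z|$ is a conjugate of $z$; in particular this prefix is primitive. If $z \ne z'$, then also $|z'|$ is a period of $x$. Two cases:
\begin{itemize}
\item If $|z| = |z'|$, the length-$|z|$ prefix of $x$ is a conjugate of both $z$ and $z'$. Then $\conj{z} = \conj{z'}$, and since both are Lyndon, $z = z'$.
\item If $|z| \neq |z'|$, we cannot just apply Fine--Wilf, because $m$ may be smaller than $|z| + |z'| - \gcd$. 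Instead I use that $\conj{z}_{m}$ has exactly $|z|$ elements when $m \ge |z|$ and $z$ is primitive (the $|z|$ words have distinct length-$|z|$ prefixes). This gives $|z| = |\conj{z}_{m}| = |\conj{z'}_{m}| = |z'|$, a contradiction.
\end{itemize}
So each circuit corresponds to a unique pair $(z,m)$.

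For (ii), I attach to the circuit $C_{z,m}$ the arc $a_{z,m} = z^{m/|z|}$, i.e.\ its lexicographically smallest arc. By Lyndon minimality, $z^{m/|z|}$ is the smallest element of $\conj{z}_{m}$: the conjugates of $z$ compare as their infinite powers, and the comparison is already decided within the first $|z| \le m$ letters. The key claim is that $a_{z,m}$ is traversed by no other circuit $C_{z',m}$ in the same $\rauzy{w}(m-1)$. This follows from (i): the arc sets $\conj{z}_{m}$ and $\conj{z'}_{m}$ are disjoint when $z \ne z'$, since a common word $x$ would have primitive length-$|z|$ and length-$|z'|$ prefixes, and then $\conj{z}_{m}$ and $\conj{z'}_{m}$ would share an element. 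A word $x$ of length $m$ lies in $\conj{z}_{m}$ only if $z$ is the Lyndon conjugate of the primitive root of $x[1..p]$, where $p$ is the smallest period of $x$. I will verify that this $p$ equals $|z|$ using \cref{lemma:lyndon_period}; this is the delicate point.

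Given these claims, the coefficient of $a_{z,m}$ in $\sum \alpha_{z',m'} \mu(C_{z',m'})$ is exactly $\alpha_{z,m}$ times a positive multiplicity, since circuits traverse arcs only in the forward direction. Hence every $\alpha_{z,m}$ is $0$, and the set is independent.

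I expect the main obstacle to be disjointness of arc sets at fixed $m$ when $m$ is close to $|z|$. Concretely, I need to show that a length-$m$ word of period $|z|$ with primitive length-$|z|$ prefix cannot also have a shorter period $q$ with a primitive length-$q$ prefix. If $q < |z| \le m$, this configuration can genuinely occur, for example $x = \idtt{abaab}$ with periods $3$ and $5$. So the injectivity of $z \mapsto \conj{z}_m$ is the real issue rather than plain disjointness. The fallback is to avoid disjointness altogether and order the circuits: process them by decreasing $|z|$ within each order $m$, and choose for each circuit an arc $z^{m/|z|}$ that does not belong to any circuit of larger root length. If needed, use $x^{m/|x|}$ for a suitable conjugate $x$ to obtain a triangular system, which still gives linear independence.
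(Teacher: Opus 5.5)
Your part (i) is correct. The cardinality argument $|\conj{z}_{m}| = |z|$ for $m \ge |z|$ is a valid substitute for the paper's observation that the smallest arcs $z^{m/|z|}$ are pairwise distinct.

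The gap is in (ii). Your key claim, that $a_{z,m} = z^{m/|z|}$ is traversed by no other circuit $C_{z',m}$, is false. So is the smallest-period criterion you planned to verify it with. Take $z = \idtt{aab}$, $z' = \idtt{aaaab}$ and $m = 5$:
\begin{itemize}
\item $a_{z,5} = \idtt{aabaa}$ is a conjugate of $\idtt{aaaab}$, hence an arc of $C_{z',5}$, even though its smallest period is $3$.
\item Both circuits lie in the union whenever $w$ has $\idtt{aabaaba}$ and $\idtt{aaaabaaaa}$ as factors.
\item The component of $\sum \alpha_{z'',m''}\,\mu(C_{z'',m''})$ at the arc $\idtt{aabaa}$ therefore receives contributions from both $\alpha_{z,5}$ and $\alpha_{z',5}$, so the diagonal argument fails.
\end{itemize}
Your fallback asserts exactly the failing direction: here $z^{m/|z|}$ does lie on a circuit of larger root length. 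Note also that injectivity of $z \mapsto \conj{z}_{m}$ from (i) does not give disjointness, as your own $\idtt{abaab}$ example shows. Neither injectivity nor disjointness is what the independence argument needs.

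What is missing is a triangular rather than diagonal structure. The paper sorts all circuits by their smallest arcs, $a_1 \prec a_2 \prec \dots \prec a_k$. For $i < j$ there are two cases:
\begin{itemize}
\item If $|a_i| \ne |a_j|$, then $a_i$ is trivially not an arc of $C_j$.
\item Otherwise every arc of $C_j$ is $\succeq a_j \succ a_i$, so again $a_i \notin C_j$.
\end{itemize}
Hence $\alpha_1 = 0$, then $\alpha_2 = 0$, and so on.

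Your root-length idea also works, but in the opposite direction. If $z^{m/|z|} \in \conj{z'}_{m}$ with $|z'| < |z| \le m$, then the prefix $z$ has period $|z'| < |z|$, contradicting \cref{lemma:lyndon_period}. Your equal-length argument from (i) excludes $|z'| = |z|$ with $z' \ne z$. So the smallest arc of $C_{z,m}$ avoids every other circuit of root length at most $|z|$ in $\rauzy{w}(m-1)$. You must therefore eliminate coefficients starting from the largest root length, not the smallest.
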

\begin{proof}
  Recall that the arc set of any circuit in $\mathrm{CS}_w(z)$ is of the form $[z]_m$ for some integer $m \ge |z|$.
  We claim that the circuit can be uniquely assigned to the lexicographically smallest arc $z^{m/|z|}$ in $[z]_m$.
  For two distinct Lyndon factors $z_1$ and $z_2 \in \lynfac{w}$ with $|z_1| \le |z_2|$ and integer $m \ge |z_2|$,
  it must hold that $z_1^{m/|z_1|} \neq z_2^{m/|z_2|}$ because:
  If $|z_1| = |z_2|$, then it is clear that $z_1^{m/|z_1|} \neq z_2^{m/|z_2|}$ as $z_1 \neq z_2$.
  Otherwise, $z_1^{m/|z_1|} = z_2^{m/|z_2|}$ implies that $z_2$ has a period $|z_1| < |z_2|$, which contradicts \cref{lemma:lyndon_period}.
  Therefore, $\cs{w}(z_1)$ and $\cs{w}(z_2)$ are disjoint, which leads to $|\bigcup_{z \in \lynfac{w}} \cs{w}(z)| = \sum_{z \in \lynfac{w}} |\cs{w}(z)|$.

  Let $C_1, C_2, \dots, C_k$ with $k = \sum_{z \in \lynfac{w}}|\cs{w}(z)|$ denote the circuits in $\bigcup_{z \in \lynfac{w}} \cs{w}(z)$ sorted
  in strictly increasing order by their lexicographically smallest arcs, 
  which is possible because those arcs are pairwise distinct as seen above.
  Note that, for any $1 \le i < j \le k$, the lexicographically smallest arc of $C_i$ is not contained in $C_j$.
  We now show that
  \[
    \alpha_1 \mu(C_1) + \alpha_2 \mu(C_2) + \dots + \alpha_{k} \mu(C_k) = \mathbf{0} \Longrightarrow \alpha_1 = \alpha_2 = \dots = \alpha_k = 0.
  \]
  In order to have the equation $\sum_{i = 1}^{k} (\alpha_i \mu(C_i)) = \mathbf{0}$,
  we need $\alpha_1 = 0$, since otherwise, 
  the component of the vector corresponding to the lexicographically smallest arc of $C_1$ cannot be zero.
  Applying the same argument recursively to the remaining equation $\sum_{i = 2}^{k} (\alpha_i \mu(C_i)) = \mathbf{0}$, 
  we can see that all coefficients $\alpha_1, \alpha_2, \dots, \alpha_k$ must be zero.
\end{proof}

The following is immediate from \cref{lemma:rauzy_max_cycle,lemma:independent_cycle_set}:
\begin{lemma}\label{lemma:size_union_cs}
  For any word $w$, $|\bigcup_{z \in \lynfac{w}} \cs{w}(z)| \le |w|$.
\end{lemma}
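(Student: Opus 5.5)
This follows by combining \cref{lemma:independent_cycle_set} with \cref{lemma:rauzy_max_cycle}; no new combinatorial argument is needed.

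First, \cref{lemma:independent_cycle_set} says that the set $\bigcup_{z \in \lynfac{w}} \cs{w}(z)$ consists of circuits in $\rauzy{w}$ and is an independent cycle set of $\rauzy{w}$. Every circuit is a cycle in the paper's terminology, so this set is a legitimate independent cycle set. Its cardinality is counted without collisions: the lemma gives exactly $\sum_{z \in \lynfac{w}} |\cs{w}(z)|$ distinct circuits. In particular, the cycle-vectors of distinct members are distinct and linearly independent, so no member is counted twice.

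Second, \cref{lemma:rauzy_max_cycle} says that the maximum size of an independent cycle set of $\rauzy{w}$ is $|w|$. This comes from the cyclomatic number $|\factors{w}\setminus\{\emptystr\}| - |\factors{w}| + |w| + 1$ via \cref{theo:cyclomatic}. Since our set is one particular independent cycle set, its size is at most this maximum, which gives $|\bigcup_{z \in \lynfac{w}} \cs{w}(z)| \le |w|$.

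The only point needing care is that $\rauzy{w}$ is the multigraph union of all orders $0$ to $|w|$. \cref{theo:cyclomatic} applies to multigraphs, so the order-$0$ loops cause no issue. Moreover, the circuits produced by \cref{lemma:conjm_circuit} with $m \ge |z|$ lie in $\rauzy{w}(m-1)$, which is a subgraph of $\rauzy{w}$. I do not expect any real obstacle; the statement is an immediate corollary of the two lemmas.
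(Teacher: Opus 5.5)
Your proposal is correct and is the paper's own argument. The paper states this lemma as immediate from \cref{lemma:independent_cycle_set}, which shows the union is an independent cycle set of $\rauzy{w}$, together with \cref{lemma:rauzy_max_cycle}, which shows the maximum size of such a set is $|w|$.
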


The next lemma evaluates the sizes of $\sqs{w}(z)$ and $\cs{w}(z)$.
\begin{lemma}\label{lemma:sqs_cs}
  For any word $w$ and $z \in \lynfac{w}$ with $|\sqs{w}(z)| \ge 1$,
  \begin{align*}
    |\sqs{w}(z)| &   =  |z|(r - 1) + s, \\
    |\cs{w}(z)|  & \ge 2|z|(r - 1) + s + 1,
  \end{align*}
  where
  \begin{align*}
    r &= \max \{ i \in [1..] \mid \exists x^{2i} \in \sqs{w}(z) \}, \\
    s &= \left| \{ x \in \conj{z} \mid x^{2r} \in \sqs{w}(z) \} \right|.
  \end{align*}
\end{lemma}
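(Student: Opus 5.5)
We prove the two relations separately. Throughout, write $p = |z|$. By \cref{lemma:lyndon_period} and primitivity, the $p$ words of $\conj{z}$ are pairwise distinct. Also, for $m \ge p$ the map $x \mapsto x^{m/p}$ is a bijection from $\conj{z}$ onto $\conj{z}_m$, since a word of length $m \ge p$ with period $p$ determines its prefix of length $p$.

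\emph{The count of squares.} Every element of $\sqs{w}(z)$ is $x^{2i}$ with $x \in \conj{z}$ and $i \ge 1$. Distinct pairs $(x,i)$ give distinct words, because length determines $i$ and the prefix of length $p$ determines $x$. Fix $i \le r-1$ and $x \in \conj{z}$. By the definition of $r$ there is a factor $y^{2r}$ with $y \in \conj{z}$. This factor has length $2rp \ge 2ip + p$, so it contains every word of $\conj{z}_{2ip}$; in particular it contains $x^{2i}$. Hence all $p(r-1)$ squares with exponent below $2r$ occur in $w$. Exactly $s$ squares with exponent $2r$ occur, by the definition of $s$, and none with a larger exponent, by the maximality of $r$. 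Summing gives $|\sqs{w}(z)| = p(r-1)+s$.

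\emph{The count of circuits.} For each $m \ge p$ with $\conj{z}_m \subseteq \factors{w}$, \cref{lemma:conjm_circuit} gives a circuit in $\rauzy{w}(m-1)$ whose arc set is $\conj{z}_m$. It belongs to $\cs{w}(z)$. Circuits for different $m$ lie in different components of $\rauzy{w}$, so they are distinct. By \cref{lemma:conjm_reduce}, the set of valid $m$ is downward closed. It therefore suffices to show that $\conj{z}_{M} \subseteq \factors{w}$ for $M = 2rp + s - 1$. Together with the values $m \in [p..M]$ this yields $M - p + 1 = 2p(r-1) + s + 1 + (p - 1) \ge 2p(r-1)+s+1$ circuits. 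In fact a slightly weaker $M$ suffices: $M' = 2rp - p + s$ gives exactly $M' - p + 1 = 2p(r-1) + s + 1$. We aim for $M'$, since it needs less.

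\emph{The main obstacle: showing $\conj{z}_{M'} \subseteq \factors{w}$.} Let $x_j = z[j..p]z[1..j-1]$, and let $J$ be the set of the $s$ indices $j$ for which $x_j^{2r}$ occurs. The circuit order cycles through the rotations, so each occurrence of $x_j^{2r}$ contains $x_{j+t}^{M'/p}$ for $t \in [0..2rp - M']$. Here $2rp - M' = p - s$, so one occurrence contains $p - s + 1$ consecutive rotations' words of length $M'$. If $s = p$, every $x_j^{2r}$ occurs and covers $x_j^{M'/p}$ directly, so we are done.

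If $s < p$, we must cover all $p$ residues using windows of $p - s + 1$ consecutive indices, each starting at some $j \in J$. The $s$ starting points are distinct residues. For a residue $k$ not in $J$, we would take the nearest $j \in J$ at or cyclically before $k$ and need $k - j \le p - s$. The gap from $j$ to $k$ contains only non-$J$ residues apart from $j$ itself, and there are only $p - s$ of those, so $k - j \le p - s$. (If $J$ were empty this would fail, but $|\sqs{w}(z)| \ge 1$ forces $s \ge 1$.)

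This covering count is the crux, and I would double-check the off-by-one at this step. The first $p-s$ claims assume the window length is $p-s+1$, that is, that $x_j^{2r}$ of length $2rp$ contains $2rp - M' + 1$ factors of length $M'$, each being $x_{j+t}^{M'/p}$. This is exactly what produces $+s+1$ in the bound.
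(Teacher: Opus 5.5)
Your proof is correct and follows essentially the same route as the paper. The squares are counted identically, and your $M' = 2rp - p + s$ is the paper's $M = 2|z|r - g + 1$ with the maximal gap $g$ between consecutive occurring rotations replaced by its bound $g \le |z| - s + 1$; your nearest-preceding-index covering argument proves that bound implicitly. The preliminary target $M = 2rp + s - 1$ is unnecessary, and it is in fact unattainable whenever $s < |z|$, so you should simply drop it.
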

\begin{proof}
  Let $x_i = z[i..|z|]z[1..i-1]$ for any $i \in [1..|z|]$, and
  let $\{ k_j \}_{j = 1}^{s}$ be the set of integers with $0 < k_1 < k_2 < \dots < k_s < |z|$ 
  such that $x_{k_j}^{2r} \in \sqs{w}$ for any $j \in [1..s]$.
  
  For every pair $i \in [1..|z|]$ and $d \in [1..r-1]$, $x_{i}^{2d}$ forms a distinct square, 
  which exists in $\sqs{w}(z)$ because $x_{i}^{2d}$ is a factor of $x_{k_1}^{2r}$.
  Then, it is clear that $|\sqs{w}(z)| = |z|(r - 1) + s$ from the definition of $s$.

  In order to see $|\cs{w}(z)| \ge 2|z|(r - 1) + s + 1$,
  let $g = \max \{ k_{j+1} - k_{j} \mid j \in [1..s] \}$, where $k_{s+1} = k_1 + |z|$.
  It is easy to see that $g \le |z| - s + 1$ 
  because $g$ is maximized when $k_{j+1} - k_{j} = 1$ 
  for every $j \in [1..s]$ except for the one that achieves maximum $g$.
  We now set $M = 2|z|r - g + 1$ and claim that $\conj{z}_{M} \subseteq \factors{w}$.
  For any $i \in [1..|z|]$ with $i \in [k_j..k_{j+1}-1]$ for some $j \in [1..s]$, 
  $x_i^{M/|z|}$ is a factor of $x_{k_j}^{2r}$ because $i - i_j + 1 \le g$ and
  the repeats of $x_i$ beginning at $i - k_j + 1$ of $x_{k_j}^{2r}$
  continues $2|z|r - (i - k_j + 1) + 1 \ge 2|z|r - g + 1 = M$ letters, long enough to cover $x_i^{M/|z|}$.
  Similarly, for any $i \in [1..k_1-1]$, $x_i^{M/|z|}$ is a factor of $x_{k_s}^{2r}$.
  Therefore, $\conj{z}_{M} \subseteq \factors{w}$. See \cref{fig:sqs_cs} for an illustration.

  It follows from \cref{lemma:conjm_reduce} that $\conj{z}_{m} \subseteq \factors{w}$ for any $m \in [1..M]$,
  which produces a circuit with the arc set $\conj{z}_{m}$ by \cref{lemma:conjm_circuit}.
  Among these circuits, $\cs{w}(z)$ contains $M - |z| + 1 = 2|z|r - |z| - g + 2 \ge 2|z|(r - 1) + s + 1$ circuits,
  spreading across the Rauzy graphs of orders from $|z|-1$ to $M-1$.
\end{proof}

\begin{figure}[t]
  \centering
  \includegraphics[
    width=0.95\linewidth
    ]{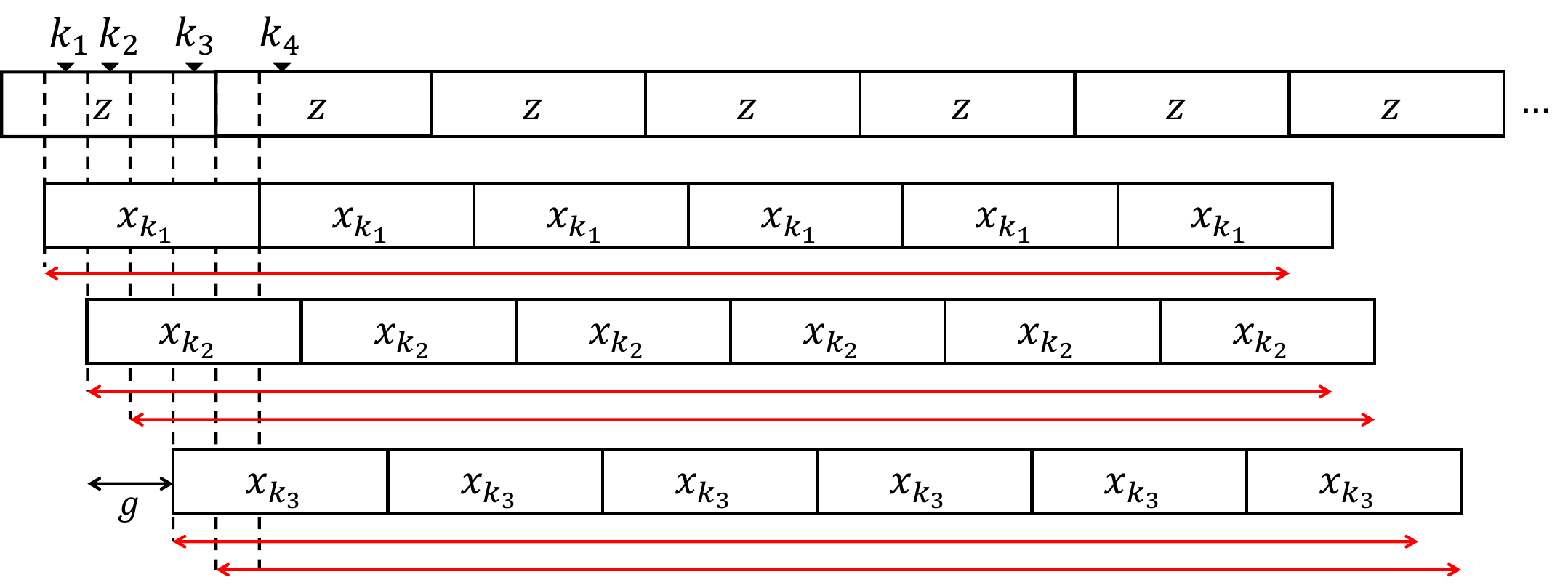}
  \caption{Illustration for the proof of \cref{lemma:sqs_cs}.
  Here consider the situation where $|z| = 5$, $r = 3$, $s = 3$, $k_1 = 2$, $k_2 = 3$ and $k_3 = 5$.
  Since $k_{2} - k_{1} = 1, k_{3} - k_{2} = 2, k_{4} - k_{3} = k_{1} + |z| - k_{3} = 2$, we have $g = 2$.
  $\sqs{w}(z)$ contains $s$ squares of the form $x_{k_j}^{2r}$ for $j \in [1..s]$, which are aligned under $z^{\infty}$.
  We see that every factor of length $2|z|r - g + 1$ of $z^\infty$ (depicted with a red double-headed arrow) is guaranteed to be covered by $x_{k_j}^{2r}$ for some $j = [1..s]$, and thus a factor of $w$.
  }
  \label{fig:sqs_cs}
\end{figure}

\begin{definition}
  For any word $w$ and $z \in \lynfac{w}$ with $|\cs{w}(z)| \ge 1$, we define
  \begin{equation*}
      \avg{w}(z) = \frac{|\sqs{w}(z)|}{|\cs{w}(z)|},    
  \end{equation*}
  and consider that every circuit in $\cs{w}(z)$ is loaded with $\avg{w}(z)$ squares.
\end{definition}

From the lower bound of $|\cs{w}(z)|$ obtained in \cref{lemma:sqs_cs} with variables $r$ and $s$,
one would anticipate that $\avg{w}(z)$ can reach its maximum $\frac{|z|}{|z|+1}$ 
when $r = 1$ and $s = |z|$,
which is formally proved in the next lemma.
\begin{lemma}\label{lemma:avg}
  For any word $w$ and $z \in \lynfac{w}$ with $|\cs{w}(z)| \ge 1$, $\avg{w}(z) \le \frac{|z|}{|z|+1}$.
\end{lemma}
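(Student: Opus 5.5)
We split on whether $\sqs{w}(z)$ is empty. If $|\sqs{w}(z)| = 0$, then $\avg{w}(z) = 0 \le \frac{|z|}{|z|+1}$ because $|\cs{w}(z)| \ge 1$. So assume $|\sqs{w}(z)| \ge 1$. Then \cref{lemma:sqs_cs} applies and gives integers $r \ge 1$ and $s \in [1..|z|]$ with
\begin{equation*}
  \avg{w}(z) \le \frac{|z|(r-1) + s}{2|z|(r-1) + s + 1}.
\end{equation*}
It remains to bound the right-hand side by $\frac{|z|}{|z|+1}$. Since all quantities are positive, this is equivalent to
\begin{equation*}
  (|z|(r-1) + s)(|z|+1) \le |z|\bigl(2|z|(r-1) + s + 1\bigr).
\end{equation*}

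We expand both sides and write $a = |z|(r-1) \ge 0$. The left side is $a|z| + a + s|z| + s$, and the right side is $2a|z| + s|z| + |z|$. The inequality reduces to
\begin{equation*}
  a + s \le a|z| + |z|, \quad \text{i.e.,} \quad a(|z| - 1) + (|z| - s) \ge 0.
\end{equation*}
This holds because $|z| \ge 1$, $a \ge 0$, and $s \le |z|$. The last fact is immediate from the definition of $s$ as the number of elements of a subset of $\conj{z}$, which has $|z|$ elements since $z$ is primitive.

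Equality requires $s = |z|$ and either $r = 1$ or $|z| = 1$, which matches the anticipated extremal case. The main point needing care is only the routine algebra. We should also confirm that for $|z| = 1$ the bound $\frac{1}{2}$ still follows: here $s = 1$, giving $\frac{r-1+1}{2(r-1)+2} = \frac{1}{2}$, which is consistent.
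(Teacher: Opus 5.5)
Your proof is correct and follows the paper's argument. You use the same case split on whether $|\sqs{w}(z)|=0$ and the same bound from \cref{lemma:sqs_cs}; the only difference is that you cross-multiply, whereas the paper manipulates $1/\avg{w}(z)$. Your reduced inequality $a(|z|-1)+(|z|-s)\ge 0$ combines exactly the two facts the paper uses, namely $s\le|z|$ and $|z|(r-1)+1\ge r$.
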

\begin{proof}
  If $|\sqs{w}(z)| = 0$, it is easy to see that $\avg{w}(z) = 0 \le \frac{|z|}{|z|+1}$.
  For the remaining case, we show that $\frac{1}{\avg{w}(z)} \ge \frac{|z|+1}{|z|}$.
  Using variables $r$ and $s$ in \cref{lemma:sqs_cs}, we have
  \begin{equation*}
    \frac{1}{\avg{w}(z)}
    = \frac{|\cs{w}(z)|}{|\sqs{w}(z)|}
    \ge \frac{2|z|(r - 1) + s + 1}{|z|(r - 1) + s}
    = 1 + \frac{|z|(r - 1) + 1}{|z|(r - 1) + s}.
  \end{equation*}
  Since $1 \le s \le |z|$, the last expression is minimized when $s = |z|$, i.e.,
  \begin{equation*}
    \frac{1}{\avg{w}(z)}
    \ge 1 + \frac{|z|(r - 1) + 1}{|z|(r - 1) + |z|}
    = 1 + \frac{|z|(r - 1) + 1}{|z|r}.
  \end{equation*}
  Since $|z| \ge 1$, we have $|z|(r - 1) + 1 \ge r$, and hence,
  \begin{equation*}
    \frac{1}{\avg{w}(z)}
    \ge 1 + \frac{r}{|z|r}
    = 1 + \frac{1}{|z|}
    = \frac{|z| + 1}{|z|}.
  \end{equation*}
  Therefore, $\avg{w}(z) \le \frac{|z|}{|z|+1}$ holds.
\end{proof}

We also have the following property:
\begin{lemma}\label{lemma:sqload}
  For any word $w$ and $\ell \in [0..|w|]$, a circuit in $\rauzy{w}(\ell)$ is loaded with at most $\frac{\ell + 1}{\ell + 2}$ squares.
\end{lemma}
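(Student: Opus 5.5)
Only the circuits in $\bigcup_{z \in \lynfac{w}} \cs{w}(z)$ carry a load, so I will fix such a circuit $C$ in $\rauzy{w}(\ell)$. The lemma then follows by combining \cref{lemma:avg} with a bound relating $|z|$ to $\ell$. Any circuit outside this union carries no squares and satisfies the bound trivially.

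So let $C \in \cs{w}(z)$ for some $z \in \lynfac{w}$. By \cref{def:cs}, the arc set of $C$ is $\conj{z}_{m}$ for some $m \ge |z|$. The arcs of $\rauzy{w}(\ell)$ are the factors of length $\ell+1$, and the words in $\conj{z}_{m}$ have length $m$. Hence $m = \ell + 1$, which gives $|z| \le m = \ell + 1$. This circuit is unique to $z$: \cref{lemma:independent_cycle_set} shows the sets $\cs{w}(z)$ are pairwise disjoint, so the load of $C$ is exactly $\avg{w}(z)$.

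By \cref{lemma:avg}, $\avg{w}(z) \le \frac{|z|}{|z|+1}$. The function $t \mapsto \frac{t}{t+1} = 1 - \frac{1}{t+1}$ is increasing in $t$, so from $|z| \le \ell+1$ we get
\[
  \avg{w}(z) \le \frac{|z|}{|z|+1} \le \frac{\ell+1}{\ell+2},
\]
which is the claimed bound.

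I expect no real obstacle. The only points needing care are that a circuit's order $\ell$ is determined by its arc length $m = \ell+1$, and that $m \ge |z|$ comes directly from \cref{def:cs}. The edge case $\ell = 0$ forces $|z| = 1$, where the bound $\frac{1}{2}$ agrees with \cref{lemma:avg}.
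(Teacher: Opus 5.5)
Your proof is correct and follows the paper's argument. A loaded circuit in $\rauzy{w}(\ell)$ belongs to some $\cs{w}(z)$ with $|z| \le \ell+1$, so its load $\avg{w}(z) \le \frac{|z|}{|z|+1} \le \frac{\ell+1}{\ell+2}$ by \cref{lemma:avg}; you simply spell out the facts the paper leaves implicit, namely that $m = \ell+1 \ge |z|$ and that the load is well defined by the disjointness in \cref{lemma:independent_cycle_set}.
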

\begin{proof}
  Since each circuit in $\rauzy{w}(\ell)$ is loaded with squares in $\sqs{w}(z)$ with $|z| \le \ell + 1$, its loaded value is at most $\avg{w}(z) \le \frac{|z|}{|z| + 1} \le \frac{\ell + 1}{\ell + 2}$ by \cref{lemma:avg}.
\end{proof}

To prove \cref{theo:n-thetan}, we divide the problem into two cases depending on whether the longest square in $w$ is sufficiently short (\cref{lemma:sqs_short}) or not (\cref{lemma:sqs_long}).
\begin{lemma}\label{lemma:sqs_short}
  For any word $w$ of length $n$ that does not contain a square of length $\ge 2\sqrt{n}$, $|\sqs{w}| \le n - \sqrt{n}$.
\end{lemma}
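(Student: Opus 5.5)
Since every square of $w$ is shorter than $2\sqrt{n}$, any square with Lyndon root $z$ has length at least $2|z|$. Hence $|\sqs{w}(z)| \ge 1$ forces $|z| < \sqrt{n}$, so squares are loaded only onto circuits belonging to $\cs{w}(z)$ with $|z| < \sqrt{n}$.

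We bound the total load by grouping circuits according to the order of the Rauzy graph they lie in. By \cref{lemma:sqs_cs} and the definition of $\avg{w}(z)$, we have
\[
|\sqs{w}| = \sum_{z} \avg{w}(z)\,|\cs{w}(z)|,
\]
the sum running over $z \in \lynfac{w}$ with $|\cs{w}(z)| \ge 1$. The key point is to control which orders $\ell$ can carry positive load. A circuit in $\cs{w}(z)$ has arc set $\conj{z}_m$ with $m \ge |z|$, so it lies in $\rauzy{w}(m-1)$. A positively loaded circuit requires $|\sqs{w}(z)| \ge 1$, and then $|z| \le \sqrt{n}$. Its loaded value is at most $\frac{|z|}{|z|+1} \le \frac{\sqrt{n}}{\sqrt{n}+1}$ by \cref{lemma:avg}, regardless of the order $\ell$. \cref{lemma:sqload} gives the same kind of bound, but we do not need its dependence on $\ell$ here.

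Combining this with \cref{lemma:independent_cycle_set,lemma:size_union_cs}, the number of circuits is at most $n$. Therefore
\[
|\sqs{w}| \le n \cdot \frac{\sqrt{n}}{\sqrt{n}+1} = n - \frac{n}{\sqrt{n}+1}.
\]
This falls just short of the target $n - \sqrt{n}$, because $\frac{n}{\sqrt{n}+1} = \sqrt{n} - 1 + \frac{1}{\sqrt{n}+1}$. The main obstacle is recovering this missing additive slack of roughly $1$.

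To close the gap I would exploit that the bound of \cref{lemma:size_union_cs} has slack. The circuits in $\cs{w}(z)$ all lie in $\rauzy{w}(\ell)$ with $\ell \ge |z|-1$, so circuits in order $\ell = 0$ use only arcs of length $1$. More usefully, I would sharpen the per-$z$ computation. If $s < |z|$ or $r \ge 2$, the inequality in \cref{lemma:avg} is strict with room to spare. In the extremal case $r = 1$ and $s = |z|$, every conjugate square $x^2$ occurs, so $\conj{z}_{M}$ with $M = 2|z|$ lies in $\factors{w}$. Then \cref{lemma:conjm_circuit} gives a circuit for $m = |z|-1$ as well, and that circuit is not counted in $\cs{w}(z)$ but still belongs to an independent family.

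So I would redo the estimate by writing
\[
|\sqs{w}| \le \sum_z \frac{|z|}{|z|+1}\,|\cs{w}(z)| = \sum_z |\cs{w}(z)| - \sum_z \frac{|\cs{w}(z)|}{|z|+1}.
\]
Using $|\cs{w}(z)| \ge |z|+1$ from \cref{lemma:sqs_cs} whenever $|\sqs{w}(z)| \ge 1$, each contributing $z$ gives a deficit of at least $1$. It then suffices to show there are at least $\sqrt{n}$ such Lyndon roots, or else that the unused independent cycles (the total $n$ minus the circuits actually used) cover the difference.

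Concretely, I would argue as follows. Let $N = \sum_z |\cs{w}(z)| \le n$, summed over the $z$ with squares. Then
\[
|\sqs{w}| \le N - \frac{N}{\sqrt{n}+1}.
\]
This is increasing in $N$, so the worst case is $N = n$. Handling the leftover constant is the step I expect to need care, either by a slightly finer bound $|z| \le \sqrt{n} - 1/2$ or by an integrality argument.
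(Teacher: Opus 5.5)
Your proposal does not reach the stated bound. It stops at $|\sqs{w}| \le n - \frac{n}{\sqrt{n}+1}$ and leaves the missing additive constant as an open obstacle. The slack you need is exactly what you threw away when you replaced your correct $|z| < \sqrt{n}$ by $|z| \le \sqrt{n}$.

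The paper keeps the strict inequality as $|z| \le \sqrt{n}-1$ for every $z$ with $|\sqs{w}(z)| \ge 1$. Then \cref{lemma:avg} bounds the load of every circuit by $\frac{|z|}{|z|+1} \le \frac{\sqrt{n}-1}{\sqrt{n}}$. With at most $n$ circuits (\cref{lemma:size_union_cs}) this gives exactly $n \cdot \frac{\sqrt{n}-1}{\sqrt{n}} = n - \sqrt{n}$, and nothing more is needed.

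Your own bound also closes at once by integrality, which you mentioned but did not carry out. Indeed $n - \frac{n}{\sqrt{n}+1} = n - \sqrt{n} + 1 - \frac{1}{\sqrt{n}+1} < n - \sqrt{n} + 1$, and $|\sqs{w}|$ is an integer. Both routes need $\sqrt{n}$ to be an integer; the paper's step $|z| \le \sqrt{n}-1$ tacitly assumes this as well. For general $n$ either argument yields only $|\sqs{w}| \le \lceil n - \sqrt{n} \rceil$, which does not matter for \cref{theo:n-thetan}.

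The other repairs you sketch do not hold up:
\begin{itemize}
\item \cref{lemma:sqs_cs} does not give $|\cs{w}(z)| \ge |z|+1$; for $r = 1$ it gives only $s+1$. This is attained: for $w = \idtt{abab}$ and $z = \idtt{ab}$ we have $|\cs{w}(z)| = 2 < |z|+1$.
\item The circuit with arc set $\conj{z}_{|z|-1}$ cannot simply be added to the independent family. Take $w = \idtt{aabaabaa}$, which is your extremal case $r = 1$, $s = |z|$ for $z = \idtt{aab}$. The circuit on $\conj{\idtt{aab}}_{2} = \{\idtt{aa}, \idtt{ab}, \idtt{ba}\}$ has cycle-vector equal to the sum of those of $(\idtt{aa}) \in \cs{w}(\idtt{a})$ and $(\idtt{ab}, \idtt{ba}) \in \cs{w}(\idtt{ab})$.
\item Nothing guarantees at least $\sqrt{n}$ Lyndon roots carrying squares.
\end{itemize}
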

\begin{proof}
  Recall that $\sqs{w} = \sum_{z \in \lynfac{w}} \sqs{w}(z)$.
  For a fixed Lyndon factor $z \in \lynfac{w}$ with $|\sqs{w}(z)| \ge 1$, the assumption on $w$ implies that $|z| \le \sqrt{n} - 1$.
  Thus, it follows from \cref{lemma:avg} that every circuit in $\rauzy{w}$ can be loaded with at most $\avg{w}(z) \le \frac{|z|}{|z|+1} \le \frac{\sqrt{n}-1}{\sqrt{n}}$ squares.
  Since there are at most $n$ circuits to be loaded by \cref{lemma:size_union_cs}, we have
  \begin{equation*}
    |\sqs{w}| \le \frac{n(\sqrt{n}-1)}{\sqrt{n}} = n - \sqrt{n}.
  \end{equation*}
  This concludes the proof.
\end{proof}

\begin{lemma}\label{lemma:sqs_long}
  For any word $w$ of length $n$ that contains a square of length $\ge 2\sqrt{n}$, $|\sqs{w}| \le n - \Theta(\log n)$.
\end{lemma}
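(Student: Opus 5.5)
The plan is to exploit the long square to find many circuits of $\rauzy{w}$ that are either unloaded or only lightly loaded, and then compare the total load with the bound $n$ from \cref{lemma:size_union_cs}.

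Let $x^{2}$ with $|x| \ge \sqrt{n}$ be a longest square in $w$. Let $z$ be its Lyndon root, so that $x \in \conj{z}_{|x|}$ up to a power, with $x = y^{r}$ for some $y \in \conj{z}$. By \cref{lemma:sqs_cs}, the number of circuits $|\cs{w}(z)|$ is at least $2|z|(r-1)+s+1$, while only $|z|(r-1)+s$ squares are assigned to them.

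\textbf{Case 1: $r \ge 2$.} Then $\cs{w}(z)$ carries at most $|\sqs{w}(z)| \le |\cs{w}(z)| - |z|(r-1) - 1$ squares. Every other circuit carries at most one square on average by \cref{lemma:avg}. Since $|z|(r-1) \ge |z| r/2 = |x|/2 \ge \sqrt{n}/2$, we would already save $\Omega(\sqrt{n}) \ge \Theta(\log n)$.

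\textbf{Case 2: $r = 1$,} so $|z| = |x| \ge \sqrt{n}$. The deficit in $\cs{w}(z)$ is only about $1$, so the saving must come from the short factors. The idea is the following.
\begin{itemize}
\item Every circuit in $\rauzy{w}(\ell)$ carries at most $\frac{\ell+1}{\ell+2}$ squares, by \cref{lemma:sqload}; that is, it loses at least $\frac{1}{\ell+2}$.
\item Summing this loss over the circuits lying in low orders gives a harmonic sum. I would show that for each order $\ell$ with $0 \le \ell < \sqrt{n}$ (or at least for a positive fraction of such $\ell$) there is at least one circuit of $\bigcup_{z'} \cs{w}(z')$ in $\rauzy{w}(\ell)$.
\item That circuit loses $\frac{1}{\ell+2}$, so the total loss is at least $\sum_{\ell=0}^{\sqrt{n}-1} \frac{1}{\ell+2} = \Theta(\log n)$.
\end{itemize}

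To produce such circuits, I would use the long square $x^{2}$ itself. Its Lyndon root $z$ satisfies $[z]_m \subseteq \factors{w}$ for $m$ up to about $2|z| - g + 1$, where $g$ is the gap from \cref{lemma:sqs_cs}. But those circuits lie in orders $\ge |z|-1$, so I need something else for the low orders. For that I would use the circuits of the shortest Lyndon root: some letter $a$ occurring in $w$ gives $\cs{w}(a)$, with circuits at orders $0$ up to the length of the longest run of $a$. This may not cover all small orders, so the \emph{main obstacle} is to guarantee circuits, and hence losses, at $\Theta(\log n)$-worth of small orders. Alternatively, one can compare against the unused capacity: the total number of independent circuits is at most $n$, and every order $\ell$ whose Rauzy graph contains no circuit of the family wastes a full unit of the bound. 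So each order either contains a loaded circuit losing $\ge \frac{1}{\ell+2}$, or contributes a loss of $1$ through the slack in \cref{lemma:size_union_cs}.

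To make this dichotomy rigorous, I would refine \cref{lemma:size_union_cs} per order. Each component $\rauzy{w}(\ell)$ has cyclomatic number $|\factors{w}(\ell+1)| - |\factors{w}(\ell)| + 1 \ge 1$ when $\ell < n$, so $\sum_{\ell}$ of these equals $n$. The independence argument of \cref{lemma:independent_cycle_set} restricted to $\rauzy{w}(\ell)$ bounds the number of circuits there by this cyclomatic number. Summing over $\ell$, the total load is
\[
\sum_{\ell}(\text{circuits in } \rauzy{w}(\ell))\cdot\frac{\ell+1}{\ell+2} \le n - \sum_{\ell=0}^{\lfloor\sqrt{n}\rfloor-1}\frac{1}{\ell+2}.
\]
This uses that each order contributes cyclomatic number at least $1$, and either all those slots are filled with circuits losing $\frac{1}{\ell+2}$ each, or at least one slot is empty, losing $1 \ge \frac{1}{\ell+2}$. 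Taken literally this seems to give $n - \Theta(\log n)$ even without the long-square hypothesis. Because of that, I would double-check whether \cref{lemma:sqload} needs the restriction $|z| \le \ell+1$, which holds since $[z]_m$ with $m \ge |z|$ lies at order $m-1 \ge |z|-1$. I would also check that the long-square hypothesis is only needed to ensure $\sqrt{n}$ relevant orders exist, i.e.\ orders $\ell < \sqrt{n}$ with $\ell \le n-1$.
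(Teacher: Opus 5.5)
There is a genuine gap in Case 2. Your Case 1 is fine, and the per-order slot counting you set up at the end is the right way to make the accounting rigorous. But the step it rests on is false, and you use the hypothesis in the wrong place.

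You claim that $\rauzy{w}(\ell)$ has cyclomatic number $c_\ell = |\factors{w}(\ell+1)| - |\factors{w}(\ell)| + 1 \ge 1$ for every $\ell < n$. This fails in general. For $w = \idtt{ab}$ and $\ell = 1$ the graph has two vertices and one arc, so $c_1 = 0$. More generally, $c_\ell = 0$ as soon as every factor of length $\ell$ occurs only once in $w$, i.e., once $\ell$ exceeds the length of the longest repeated factor. Only $\sum_\ell c_\ell = n$ is true. This is why your argument ``seemed to work without the hypothesis''. The long square is not needed to make the orders $\ell < \sqrt{n}$ exist; it is needed to make $c_\ell \ge 1$ at those orders. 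Your dichotomy (``a slot is filled and loses $\frac{1}{\ell+2}$, or a slot is empty and loses $1$'') says nothing at an order with no slots.

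The missing idea is already in your paragraph on the long square. You dismiss its circuits because ``they lie in orders $\ge |z|-1$'', but that holds only for the circuits of $\cs{w}(z)$. Take the square of length $2M$ with $M \ge \sqrt{n}$. It gives $\conj{z}_{M} \subseteq \factors{w}$. Then \cref{lemma:conjm_reduce,lemma:conjm_circuit} give a circuit with arc set $\conj{z}_{m}$ in $\rauzy{w}(m-1)$ for \emph{every} $m \in [1..M]$, including $m < |z|$. Such a circuit need not belong to any $\cs{w}(z')$ and may carry no load. However, its cycle-vector is nonzero, so $c_{m-1} \ge 1$ for all $m \in [1..M]$, which is exactly what your dichotomy needs.

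With this in place, your per-order count gives a total load of at most
\[
\sum_{\ell < M} \bigl( c_\ell - \tfrac{1}{\ell+2} \bigr) + \sum_{\ell \ge M} c_\ell = n - \sum_{m=1}^{M} \frac{1}{m+1}.
\]
This is precisely the paper's bound: $M$ circuits loaded with at most $\sum_{m=1}^{M} \frac{m}{m+1}$ squares, plus at most $n - M$ further circuits via \cref{lemma:size_union_cs}. Your version is in fact a more careful rendering of that accounting. Once the low-order circuits come from \cref{lemma:conjm_circuit}, the case split on $r$ and the attempt to find low-order circuits through single letters become unnecessary.
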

\begin{proof}
  Let $M \ge \sqrt{n}$ be an integer such that there is a square of length $2M$ of $w$.
  Since the square contains all conjugates of length $M$,
  we have $\conj{z}_{M} \subseteq \factors{w}$ with its Lyndon root $z$.
  It follows from \cref{lemma:conjm_reduce,lemma:conjm_circuit} that $\rauzy{w}(m-1)$ contains at least one circuit for any $m \in [1..M]$.
  Since \cref{lemma:sqload} implies that each circuit in $\rauzy{w}(m - 1)$ 
  can be loaded with at most $\frac{m}{m + 1}$ squares,
  there are $M$ circuits whose loaded squares amount up to
  \begin{equation*}
    \sum_{m=1}^{M} \frac{m}{m + 1} 
    = \sum_{m = 1}^{M} \left(1 - \frac{1}{m + 1} \right)
    = M - \Theta(\log M).
  \end{equation*}
  By \cref{lemma:size_union_cs}, there are at most $n - M$ remaining independent circuits,
  which are loaded with at most $n - M$ squares in total.
  Putting these together, we have
  \begin{equation*}
    |\sqs{w}| \le (n - M) + (M - \Theta(\log M)) = n - \Theta(\log M) \le n - \Theta(\log n).
  \end{equation*}
  This concludes the proof.
\end{proof}

\cref{theo:n-thetan} is obtained from \cref{lemma:sqs_short,lemma:sqs_long}.
\begin{theorem}\label{theo:n-thetan}
  For any word $w$ of length $n$, 
  the number of distinct squares in $w$ is upper bounded by $n - \Theta(\log n)$.
\end{theorem}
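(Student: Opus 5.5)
The plan is a two-case split on the length of the longest square in $w$, with \cref{lemma:sqs_short} and \cref{lemma:sqs_long} handling the two cases. Fix a word $w$ of length $n$. Either $w$ contains a square of length at least $2\sqrt{n}$, or it does not. These alternatives are exhaustive and exclusive, so it suffices to bound $|\sqs{w}|$ in each case separately.

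In the first case, where no square of length $\ge 2\sqrt{n}$ occurs, \cref{lemma:sqs_short} gives $|\sqs{w}| \le n - \sqrt{n}$. For all sufficiently large $n$ we have $\sqrt{n} \ge c\log n$ for any fixed constant $c$, so this is at most $n - c\log n$. In the second case, \cref{lemma:sqs_long} gives $|\sqs{w}| \le n - \Theta(\log n)$ directly. Taking the weaker of the two bounds, namely $n - c'\log n$ with $c'$ the constant from \cref{lemma:sqs_long} (possibly decreased so that it is also dominated by $\sqrt{n}$), yields a uniform bound $n - c'\log n$ for all large $n$. Small $n$ are absorbed into the asymptotic notation; the Brlek--Li bound $|\sqs{w}| \le n - \sigma \le n-1$ can be cited if an explicit statement is wanted for every $n \ge 2$.

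The only step needing care is making the $\Theta(\log n)$ in \cref{lemma:sqs_long} explicit. There the deficit is $\sum_{m=1}^{M}\frac{1}{m+1} = H_{M+1} - 1 \ge \ln(M+1) - 1$. With $M \ge \sqrt{n}$ this is at least $\frac{1}{2}\ln n - 1$, which is a concrete lower bound of order $\log n$. Combining it with the first case gives $|\sqs{w}| \le n - \min\{\sqrt{n}, \frac12\ln n - 1\} = n - \frac12 \ln n + 1$ for all $n$, and this is $n - \Theta(\log n)$, completing the proof. I expect no real obstacle, since this is merely assembling the two lemmas; the only point to check is that the constant in the logarithmic term is uniform, and the explicit harmonic-sum estimate settles that.
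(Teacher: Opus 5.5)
Your proposal is correct and is the paper's proof: the theorem follows by splitting on whether $w$ contains a square of length at least $2\sqrt{n}$ and applying \cref{lemma:sqs_short} or \cref{lemma:sqs_long} accordingly. Your estimate $\sum_{m=1}^{M}\frac{1}{m+1} = H_{M+1}-1 \ge \frac{1}{2}\ln n - 1$ only makes explicit, and uniform, the constant that the paper leaves inside the $\Theta(\log M)$ of \cref{lemma:sqs_long}.
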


\section{Conclusion and future work}
We presented another proof for the upper bound $n - \Theta(\log n)$ for the number of distinct squares in a word of length $n$.
We would expect that the new insights obtained in this paper will be extended to lower the bound further.
Also, it would be interesting to see if our techniques can be applied to improve related bounds such as 
bounds for distinct $k$-powers in a word~\cite{2024LiPR_NoteOnMaximNumberOf}, and
bounds for distinct squares in a circular word~\cite{2017AmitG_DistinSquarInCirculWords_SPIRE}.

\bibliography{refs}

\end{document}